\renewcommand{\H}{\ensuremath{\mathcal{H}}}
\DeclareMathOperator{\Prb}{P}
\newcommand{\cProb}[2]{\Prb(#1\!\mid\!#2)}
\newcommand{\Bt}{{\tt t}}
\newcommand{\Bf}{{\tt f}}
\newcommand{\Proj}[1]{\ensuremath{\mathfrak{P}(#1)}}
\newcommand{\ProjH}{\Proj{\H}}
\newcommand{\Q}{\ensuremath{\mathfrak{Q}}}
\newcommand{\QC}{\ensuremath{\mathcal{Q}}}
\newcommand{\Z}{\ensuremath{\mathcal{Z}}}
\tikzset{colorbox/.style={thick, rounded corners=2pt, text height=1.7ex,text depth=.25ex, draw=#1!70!black, fill=#1!30}}
\tikzset{colorboxS/.style={thick, rounded corners=2pt, text height=1.2ex,text depth=.25ex, draw=#1!70!black, fill=#1!30, font=\footnotesize}}
\tikzset{colorboxXS/.style={thick, rounded corners=2pt, text height=0.7ex,text depth=.10ex, draw=#1!70!black, fill=#1!30, font=\tiny}}
\tikzset{roundedbox/.style={thick, rounded corners=2pt, text height=1.7ex,text depth=.25ex, draw=black}}
\tikzset{CBedgy/.style={thick, text height=1.7ex,text depth=.25ex, draw=#1!70!black, fill=#1!30, font=\ttfamily}}
\tikzset{colorelement/.style={thick, rounded corners=2pt, draw=#1!70!black, fill=#1!30}}
\tikzset{tb/.style={font=\footnotesize, text width=#1, text opacity=1, opacity=.8, fill=white},
  tb/.default=4.5cm}
\tikzset{tbS/.style={font=\scriptsize, text width=#1, text opacity=1, opacity=.8, fill=white},
  tbS/.default=4.5cm}
\tikzset{tbXS/.style={font=\tiny, text width=#1, text opacity=1, opacity=.8, fill=white},
  tbXS/.default=4.5cm}
\tikzset{tbBB/.style={draw=black!70, rounded corners=2pt, font=\footnotesize, text width=#1, text opacity=1, opacity=.8, fill=white},
  tbBB/.default=4.5cm}
\tikzset{tbBBS/.style={draw=black!70, rounded corners=2pt, font=\scriptsize, text width=#1, text opacity=1, opacity=.8, fill=white},
  tbBBS/.default=4.5cm}
\tikzset{tbBBXS/.style={draw=black!70, rounded corners=2pt, font=\tiny, text width=#1, text opacity=1, opacity=.8, fill=white},
  tbBBXS/.default=4.5cm}
\tikzset{quote/.style={thick, draw=black!70, rounded corners=2pt, font=\footnotesize, text width=#1, text opacity=1, opacity=.8, fill=white},
  quote/.default=4.5cm}
\tikzset{quoteS/.style={thick, draw=black!70, rounded corners=2pt, font=\scriptsize, text width=#1, text opacity=1, opacity=.8, fill=white},
  quoteS/.default=4.5cm}
\tikzset{quoteXS/.style={thick, draw=black!70, rounded corners=2pt, font=\tiny, text width=#1, text opacity=1, opacity=.8, fill=white},
  quoteXS/.default=4.5cm}
\tikzset{quoteNB/.style={font=\footnotesize, text width=#1, text opacity=1, opacity=.8, fill=white},
  quoteNB/.default=4.5cm}
\tikzset{comment/.style={thick, draw=black!70, rounded corners=2pt, font=\scriptsize\itshape, text width=#1, text opacity=1, opacity=.8, fill=white},
  comment/.default=4.5cm}
\tikzset{commentS/.style={thick, draw=black!70, rounded corners=2pt, font=\tiny\itshape, text width=#1, text opacity=1, opacity=.8, fill=white},
  commentS/.default=4.5cm}
\tikzset{commentVW/.style={thick, draw=black!70, rounded corners=2pt, font=\scriptsize\itshape, text opacity=1, opacity=.8, fill=white}}
\tikzset{commentSVW/.style={thick, text height=0.7ex,text depth=.10ex, draw=black!70, rounded corners=2pt, font=\tiny\itshape, text opacity=1, opacity=.8, fill=white}}
\tikzset{commentSVWR/.style={thick, text height=0.7ex,text depth=.10ex, draw=black!70, rounded corners=2pt, font=\tiny, text opacity=1, opacity=.8, fill=white}}
\tikzset{conn/.style={thick, shorten <=#1, shorten >=#1}}
\tikzset{tconn/.style={shorten <=#1, shorten >=#1}}
\tikzset{gconn/.style={thick, shorten <=#1, shorten >=#1, draw=gray!80}}
\tikzset{arr_node/.style={pos=0.5,above,font=\scriptsize, sloped}}
\tikzset{ctag/.style={thick, dashed, rounded corners=2pt, text height=1.7ex,text depth=.25ex, draw=#1!70!black, fill=#1!30, font=\ttfamily}}
\tikzset{ist/.style={thick, shorten <=4pt, shorten >=4pt, arrows = {-Bracket[reversed,round]}}}
\tikzset{istgleich/.style={thick, shorten <=4pt, shorten >=4pt, arrows = {Bracket[reversed,round]-Bracket[reversed,round]}}}
\tikzset{nicht/.style={thick, dashed, shorten <=4pt, shorten >=4pt, arrows = {Bracket[round]-}}}
\definecolor{yorange}{HTML}{ff8c00}
\newcommand\blArrow{}
\def\blArrow[#1](#2);
\newcommand\stdBlArrow{}
\def\stdBlArrow(#1)
\newcommand\redArrow{}
\def\redArrow(#1);
\newcommand\fSTS{}
\def\fSTS(#1);
\newcommand\reduction{}
\def\reduction(#1);
\newcommand\customLegend{}
\def\customLegend(#1);
\newcommand*{\blitzset}{\pgfqkeys{/blitz}}\blitzset{
  \savedanchor\centerpoint{
    \pgf@x = .5\wd\pgfnodeparttextbox
    \pgf@y = .5\ht\pgfnodeparttextbox
  }
    \pgfmathsetmacro{\alpha}{atan(2*\pgfkeysvalueof{/blitz/ratio}*\pgfkeysvalueof{/blitz/height}/\pgfkeysvalueof{/blitz/width})}
    \pgfmathsetlength\pgf@xb{.5*\pgfkeysvalueof{/blitz/width}-\pgfkeysvalueof{/blitz/breadth}/tan(\alpha/2)}
    \pgfmathsetlength\pgf@yb{\pgfkeysvalueof{/blitz/breadth}}
    \pgfmathsetlength\pgf@xb{\pgfkeysvalueof{/blitz/width}}
    \pgfmathsetlength\pgf@yb{\pgfkeysvalueof{/blitz/breadth}+\pgfkeysvalueof{/blitz/ratio}*\pgfkeysvalueof{/blitz/height}}
    \pgfmathsetlength\pgf@xb{\pgf@yb*sin(90-\alpha)}
    \pgfmathsetlength\pgf@xb{2*\pgfkeysvalueof{/blitz/breadth}/cos(90-\alpha)}
    \pgfmathsetlength\pgf@yb{\pgfkeysvalueof{/blitz/ratio}*\pgfkeysvalueof{/blitz/height}-\pgfkeysvalueof{/blitz/breadth}}
    \pgfmathsetlength\pgf@xb{\pgf@yb*sin(90-\alpha)}
    \pgfmathsetlength\pgf@xb{\pgfkeysvalueof{/blitz/width}}
    \pgfmathsetlength\pgf@yb{(1-\pgfkeysvalueof{/blitz/ratio})*\pgfkeysvalueof{/blitz/height}+\pgfkeysvalueof{/blitz/breadth}}
    \pgfmathsetlength\pgf@xb{.5*\pgfkeysvalueof{/blitz/width}+\pgfkeysvalueof{/blitz/breadth}/tan(\alpha/2)}
\newcommand*{\srefset}{\pgfqkeys{/sref}}\srefset{
  \savedanchor\centerpoint{
    \pgf@x = .5\wd\pgfnodeparttextbox
    \pgf@y = .5\ht\pgfnodeparttextbox
  }
    \pgfmathsetlength\pgf@xb{.5*\pgfkeysvalueof{/sref/width}}
    \pgfmathsetlength\pgf@yb{.5*\pgfkeysvalueof{/sref/height}}
    \pgfmathsetlength\pgf@yc{.75*\pgfkeysvalueof{/sref/height}}
\newcommand*{\qmarkset}{\pgfqkeys{/qmark}}\qmarkset{
  \savedanchor\centerpoint{\pgf@x=.5\wd\pgfnodeparttextbox \pgf@y=.5\ht\pgfnodeparttextbox }
\pgfpointadd{\centerpoint}{\pgfpoint{-3/6*\width}{.3*\height}}}
\pgfpointadd{\centerpoint}{\pgfpoint{.5*\width}{.4*\height}}}
\pgfpointadd{\centerpoint}{\pgfpoint{.5*\width}{-.05*\height}}}
\pgfpointadd{\centerpoint}{\pgfpoint{.5*\width}{-.45*\height}}}
\pgfpointadd{\centerpoint}{\pgfpoint{-.17*\width}{-.2*\height}}}
\pgfpointadd{\centerpoint}{\pgfpoint{-.2*\width}{-0.78*\height}}}
    \newtheorem{theorem}{Theorem}
    \newtheorem{lemma}{Lemma}
\theoremstyle{remark}
    \theoremstyle{definition}
    \newtheorem{definition}{Definition}
\newif\ifradical
\newif\ifbeta
\newif\ifuncertain
\newif\ifcomments
\newcounter{CtrSprachspiel}
\renewcommand{\epsilon}{\varepsilon}
\begin{document}

\title{Contextuality: It's a Feature, not a Bug}
\title{Measuring measurements}
\title{Measuring Measuring}

\author{Arne Hansen and Stefan Wolf}
\affiliation{Facolt\`a di Informatica, 
Universit\`a della Svizzera italiana, Via G. Buffi 13, 6900 Lugano, Switzerland}

\date{\today}

\begin{abstract}
\noindent
Measurements play a crucial role in \emph{doing physics}:
Their results provide the basis on which we adopt or reject physical theories.
In this note, we examine the effect of subjecting measurements themselves to our experience.
We require that our contact with the world is empirically warranted.
Therefore, we study theories that satisfy the following assumption: Interactions are accounted for so that they are \emph{empirically traceable}, and \emph{observations necessarily go with such an interaction} with the observed system.
Examining, with regard to these assumptions, an abstract representation of measurements with tools from quantum logic leads us to \emph{contextual} theories.
Contextuality becomes a means to render interactions, thus also measurements, empirically tangible. 
The measurement becomes problematic---also beyond quantum mechanics---if one tries to commensurate the assumption of tangible interactions with the notion of a \emph{spectator theory}, i.e., with the idea that measurement results are \emph{read off without effect}.
The problem, thus, presents itself as the collision of different epistemological stances with repercussions beyond quantum mechanics.
 \end{abstract}

\maketitle

\section{Introduction}
\label{sec:introduction}
\noindent
The infamous Wigner's-friend experiment~\cite{wigner1963problem, deutsch1985quantum, Wigner1961} serves to illustrate the measurement problem~\cite{Maudlin95,BHW16}:
An experimenter, called Wigner, performs a measurement on the joint system~$F\otimes S$ that consists of his friend~$F$ who measures a system~$S$.
If the joint system~$F\otimes S$ is isolated,~$S$ is prepared in a suitable state, and Wigner and his friend choose specific measurements, then there are different predictions about Wigner's final measurement:
If one requires the friend's measurement to yield definite results and assumes this to lead to a collapse, then one obtains measurement probabilities different from the ones derived from the unitary evolution of an isolated system~$F\otimes S$.

The incommensurability of definite measurement results with the unitary evolution of isolated system is not so much a peculiarity---or defect---of quantum mechanics.
Instead, we argue that it appears in theories that 
\begin{enumerate*}[label={(\alph*)}]
  \item\label{r1} account for interactions so that they are empirically significant, 
  \item\label{r2} require that an observation necessarily goes with \emph{such} an interaction,   
  \item\label{r4} are falsifiable, and
  \item\label{r3} in which experimental results have a minimal stability.
\end{enumerate*}
The first two requirements render an observation itself empirically traceable.
They are combined in the \emph{interaction assumption}: 
\begin{enumerate}[label={(IntA)}]
  \item\label{IntAssum} Interactions are empirically traceable. An observation necessitates such an interaction.
\end{enumerate}
The last requirement~\ref{r3} is a generalization of Popper's characterization of physics being concerned with reproducible effects\footnote{\blockcquote[{\S}I.8, emphasis in original]{Popper1958}[.]{Indeed the scientifically significant \emph{physical effect} may be defined as that which can be regularly reproduced by anyone who carries out the appropriate experiment in the way prescribed}

 } to the demand that asking the same question twice will yield the same answer:
\begin{enumerate}[label={(ISys)}]
  \item\label{IsolSys} There exist conditions under which two equivalent, subsequent measurements performed on the same system yield the same answer.
    These conditions are independent of the questions asked.
\end{enumerate}
A system satisfying these conditions will be called \emph{isolated}.\footnote{The notion, eventually, matches the common understanding of an isolated system.
We do not mean to answer the question whether \emph{perfectly isolated systems can ever be observed}. 
Indeed, we are rather doubtful that macroscopic perfectly isolated systems will ever occur in experimental setups.
Conceptually, the notion of an isolated system is important for quantum mechanics, and for the measurement problem.
These consideration place this discussion in a conceptual realm.}

In Section~\ref{sec:describing_systems}, we introduce the formal means to describe measurements in terms of lattices.
In Section~\ref{sec:interacting_systems}, we examine how to ensure that the interaction assumption~\ref{IntAssum} is satisfied.
This leads to a dichotomy between interacting systems, as described in Section~\ref{sec:interacting_systems}, and isolated systems, as described in Section~\ref{sec:isolated_systems}.
In Section~\ref{sec:interactions_within_a_joint_system}, we combine these notions and investigate how to account for an interaction \emph{within} an isolated systems, i.e., an interaction between two sub-systems of an isolated (joint) system. 
The measurement problem then arises when one attempts to impose the interaction assumption while upholding the idea of a \emph{spectator theory}, as discussed in Section~\ref{sec:spectator_theories}.

 \section{Describing Systems}\label{sec:describing_systems}
\noindent
We discuss how to abstractly represent measurements in light of falsifiability and~\ref{IsolSys}.
We follow the path of quantum logic~\cite{BirkhNeumLQM,Piron1976,bacciagaluppi2009,Coecke2000,sep-qmlogics,Moretti2019}---without actually referring to quantum mechanics---, and rely on notions of ordered sets and lattices as briefly summarized in the Appendix.
Along the way, we put the program into a new perspective.

We think of a measurement as an inquiry about a \emph{binary question}, i.e., about a question with two possible answers~$\Bt$ and~$\Bf$.\footnote{The binary questions are also referred to as \emph{propositions}. 
We think of these questions as \emph{being inquired about}, and not ``asked to the system.''
The term question allows generally for more than two answers, as considered, e.g., in~\cite{Reichenbach1944,Putnam1957,Kamlah1975}.}
The binary questions are represented by elements in a set~$\Q$.
By the requirement~\ref{IsolSys}, there exists an equivalence relation on~$\Q$, such that under appropriate conditions, two equivalent questions~$\alpha\sim\beta$ with~$\alpha,\beta\in\Q$ yield equal answers.
Let us denote by~$\QC$ the corresponding set of equivalence classes.
To ensure falsifiability~\cite{Popper1934}, we assume that for any equivalence class~$a\in\QC$, there exists a unique \emph{complementary class}~$\neg a\in\QC$ such that an inquiry about any question in~$a$ yields~$\Bt$ if and only if an inquiry about any question in~$\neg a$ yields~$\Bf$.\footnote{Subsequently, we denote elements in~$\Q$ by Greek letters, elements in~$\QC$ by Latin letters, and variables representing answers to elements in~$\Q$ by the corresponding capital Latin letters.}

We assume that the elements in~$\Q$ allow for a partial temporal order:
If a measurement corresponding to~$\alpha$ is performed \emph{before} another one~$\beta$, then~$\alpha<_t\beta$.
The conditional on~$\Q$, i.e.,~$\alpha\rightarrow\beta$, is defined as follows:
\begin{quote}
  If the inquiry about~$\alpha$ yields~$\Bt$, then a subsequent inquiry about~$\beta>_t\alpha$ yields~$\Bt$.
\end{quote}
The conditional on~$\Q$ induces an order relation on the set of equivalence classes,~$a<b$ for~$a,b\in\QC$, if verifying the order relation does not break the equivalence:
If we inquire about consecutive questions
\begin{equation*}
  \alpha<_t\beta<_t\alpha'<_t\beta', \ \text{with} \ \alpha\rightarrow\beta,\ \alpha\sim\alpha', \ \beta\sim\beta',
\end{equation*}
 then we assume to still obtain equal answers for~$\alpha$ and~$\alpha'$, as well as for~$\beta$ and~$\beta'$.
Thus, we demand that the sublattice generated by~$\{a,\neg a, b, \neg b\}$ is distributive if~$a<b$.
If there exist elements~$0$ and~$1$ such that~$a\land\neg a = 0$ and~$a\lor\neg a = 1$ for all~$a\in\QC$, then the above sublattice requirement renders the complement defined above order-reversing and leaves us with an orthocomplemented lattice of classes of equivalent questions~$(Q,<)$.\footnote{If~$a<b$, then
\begin{equation*}
  b\land \neg a = (a \lor b) \land \neg a = (a \land \neg a) \lor (b\land \neg a)
\end{equation*}
by distributivity.
With~$a\land\neg a = b\land\neg b$,
\begin{equation*}
  b\land \neg a = (b \land \neg b) \lor (b\land\neg a) = b\land (\neg a \lor\neg b)\,,
\end{equation*}
thus,~$\neg a > \neg b$.}
To ensure the distributivity of the sublattice defined above, we require the lattice to be orthomodular, i.e., to satisfy the following:
\begin{equation*}
  \text{If} \ a<b, \text{then} \ a\land (\neg a \lor b) = b\,.
\end{equation*}

 \section{Interacting systems}\label{sec:interacting_systems}
\noindent
We now turn to the question how to ensure the interaction assumption~\ref{IntAssum}.
An obvious way to trace interactions is to ``ask the system whether it interacted.''
This assumes the existence of a corresponding equivalence class~$q_{\text{int}}\in\QC$.
If we inquire consecutively about~$\alpha<_t\sigma$ with~$\sigma\in q_{\text{int}}$, then we expect~$\sigma$ to yield~$\Bt$ independent of the result of the inquiry about~$\alpha$.
This should also hold for~$\neg\alpha$.
It follows that~$q_{\text{int}}> \alpha\lor \neg\alpha$, and, therefore,~$q_{\text{int}}=1$.
Thus, the interaction assumption cannot be realized by a single inquiry about questions in a special equivalence class in~$\QC$.

Yet, we can establish within~$\Q$ whether an interaction occurred by how questions \emph{relate} to one another.
In particular, if we aim to position the characteristic ``having interacted'' dichotomously to ``being isolated'' as described in~\ref{IsolSys}, then the former characteristic is expected to be relational as is the latter.
Following this path, we demand that equivalent questions~$\sigma,\sigma'\in q_{\text{int}}$ inquired about before and after an interaction corresponding to an inquiry about a question~$\alpha\in a$ with~$\sigma<_t\alpha<_t\sigma'$ do not necessarily yield the same answer independent of what the result of the inquiry about~$\alpha$ is.
For the equivalence classes this entails\footnote{Imagine inquiring about~$\sigma<_t\alpha<_t\alpha'<_t\sigma'$ where~$\sigma,\sigma'\in q_{\text{int}}, \alpha\in a,\alpha'\in \neg a$. Then,~$\sigma\not\rightarrow\sigma'$. The same is the case for~$\sigma,\sigma'\in\neg q_{\text{int}}$.}
\begin{equation}\label{eq:non-comp}
  (q_{\text{int}} \land a ) \lor (q_{\text{int}}\land \neg a) \neq q_{\text{int}}\,.
\end{equation}
That is,~$q_{\text{int}}$ is \emph{incompatible} with~$a$.
Equivalently, the sublattice generated by~$q_{\text{int}}$ and~$a$ is not distributive (see Appendix).
As compatibility in an orthomodular lattice is symmetric, the interaction corresponding to inquiries about questions in~$q_{\text{int}}$ can be traced inversely with inquiries about questions in~$a$.

To ensure that \emph{all} elements in~$\Q$ correspond to \emph{traceable} interactions, we require that in the orthomodular lattice~$\QC$, the sublattice~$\Z$ of elements \emph{compatible with all other elements} in the lattice, called the \emph{center}, contains merely~$1$ and~$0$.
The requirement for~$\QC$ to form an orthomodular lattice with trivial center is sufficient to satisfy the interaction assumption.
Subsequently, we turn to the question whether it is necessary.

\subsection{Assigning probabilities}
\label{sub:gen_born}
\noindent
The above discussion is inspired by quantum mechanics:\footnote{See, e.g.,~\cite{Moretti2019}.} The set of orthogonal projectors on a Hilbert space~$\ProjH$ forms an atomic, orthomodular lattice with the order relation
\begin{equation*}
  P<Q \ \iff \ P(\H) \ \text{is subspace of} \ Q(\H)\,.
\end{equation*}
Gleason's theorem~\cite{Gleason57} establishes a one-to-one correspondence between probability distributions over~$\ProjH$ and density matrices if~$\dim \H\geq 3$.
With projectors forming equivalence classes along a real time parameter, 
\begin{equation*}
  \Q^{\text{qm}}=\ProjH\times\mathbb{R}\,, \text{and}\ \QC^{\text{qm}}=\ProjH\, ,
\end{equation*}
quantum mechanics carries the lattice structure \emph{before} assigning probabilities.
A priori, the lattice structure of~$\QC$ is not evident: A theory does not primarily make statements about the relation~$\alpha\rightarrow\beta$.

More generally, theories yield probability distributions for time-ordered sequences of questions,\footnote{In this approach, a map, analogous to the Born rule, is an essential part of physical theories. In classical mechanics, for example, such a ``Born rule'' consists of assigning deterministic probabilities to subsets of phase space. As we discuss below, this is only possible because the subset lattice is a Boolean lattice.} i.e., 
\begin{IEEEeqnarray*}{l}
  \Prb\big( (\alpha_1, A_1), (\alpha_2, A_2), \ldots \big) \\
  \qquad \text{with} \quad \alpha_i\in\Q, \alpha_l<_t\alpha_{l+1},A_i\in\{\Bt,\Bf\}\,,
\end{IEEEeqnarray*}
where neither~$\Q$ has a pre-established conditional, nor~$\QC$ a natural lattice structure.
Then, the requirement~\ref{IsolSys} that consecutive equivalent questions
\begin{equation*}
 \alpha\sim\beta,\ \alpha<_t\beta,\ \alpha,\beta\in\Q  
\end{equation*}
yield equal answers~$A,B\in\{\Bt,\Bf\}$ translates to 
\begin{equation*}
  \Prb( A = B) = \Prb( A=1, B=1) + \Prb( A=0, B=0) = 1 \,.
\end{equation*}

Let us consider the possibility that the probability derives from a unary function 
\begin{equation*}
  \mu: \Q\to[0,1]
\end{equation*}
such that~$\mu(\alpha)$ is the probability for the answer to~$\alpha$ is~$\Bt$.
From the above formulation of~\ref{IsolSys}, it follows that 
\begin{equation*}
  \Prb(A =B) = \mu(\alpha)\mu(\beta) + (1-\mu(\alpha))(1-\mu(\beta)) = 1
\end{equation*}
which is the case if and only if~$\mu(\alpha) =\mu(\beta)=0$ or~$\mu(\alpha) =\mu(\beta)=1$.
Thus, the function~$\mu: \Q\to \{0,1\}$ takes merely two values, and is constant within an equivalence class~$a\in\QC$. 
Therefore, there is an induced function~$\mu':\QC\to\{0,1\}$.
An immediate consequence is that if the question~$q_{\text{int}}$ from above is an element~$\QC$, then inquiring about the question ``whether the system interacted'' yields either always~$\Bt$ or always~$\Bf$, independently of inquiries about any other question. 
This is a contradiction with~\ref{IntAssum}.
A theory satisfying both~\ref{IntAssum} and~\ref{IsolSys} cannot allow for an assignment of probabilities to elements in~$\Q$ independent of inquiries about other, non-equivalent questions:
The theory is contextual~\cite{KS67,sep-kochen-specker}.

To ensure a minimal detectability of inquiries and their corresponding interactions, we are lead to assume the following, similar to Heisenberg uncertainty: 
For any~$\alpha\in\Q$ there exist equivalent~$\beta_1\sim\beta_2, \beta_1<_t\alpha<_t\beta_2$ such that 
\begin{equation*}
  \Prb(B_1\neq B_2) = \sum_{A,B} \Prb( B, A, \neg B) \geq\epsilon
\end{equation*}
for some~$\epsilon > 0$.

To connect this back to the lattice formalism, we ask: Does a contextual theory satisfying~\ref{IsolSys} and~\ref{IntAssum} give rise to an orthomodular lattice on~$\QC$?
Employing the contextuality, we define the order relation~$a<b$ by
\begin{IEEEeqnarray*}{RL}
  &\cProb{(\beta,\Bt)}{(\alpha,\Bt)} = 1 \\
  &\qquad \forall \alpha \in a, \beta\in b\,,\quad \text{and} \\
  & \Prb(A_1 = A_2) = \sum_{A,B} \Prb( A, B, A) = 1 \\
  &\qquad \forall \alpha_i\in a, \beta\in b, \alpha_1<_t\beta<_t\alpha_2\,,
\end{IEEEeqnarray*}
and the complement~$\neg a$ by
\begin{IEEEeqnarray*}{RL}
  \alpha'\in\neg a \ \iff \ &\cProb{(\alpha',\Bf)}{(\alpha,\Bt)}=1 \ \text{and} \\
  &\cProb{(\alpha',\Bt)}{(\alpha,\Bf)}=1\,.
\end{IEEEeqnarray*}
Assuming falsifiability, we demand that the complement exists and is unique.
While this yields a complemented and weakly modular poset, it is not clear whether it also constitutes an orthomodular lattice.
For the remainder of this text, we assume that~$\Q$ forms an orthomodular lattice.

The converse of the above consideration is: What probability distributions can be assigned to an orthomodular lattice?
Let us assume that~$\QC$ forms such a lattice and that~$\mu':\Q\to[0,1]$ is a function that satisfies
\begin{IEEEeqnarray*}{c}
  \mu'(0) = 0\,, \qquad \mu'(1) = 1\,;\\
  \text{if} \ a,b \ \text{are compatible, then} \\ \mu'(a) + \mu'(b) = \mu'(a\land b) + \mu'(a\lor b)\,; \\
  \text{if} \ \mu'(a_i) = 1 \ \text{then} \ \mu'\big(\land_i a_i\big) = 1\,.
\end{IEEEeqnarray*}
If we impose~\ref{IsolSys}, then, with the same reasoning as above,~$\mu'$ is a \emph{dispersion-free state}~\cite{JP1963}.
From Theorem~I in~\cite{JP1963} and Theorem~1 in~\cite{Gudder1968}, it follows that if there exists a dispersion-free state on~$\QC$ then the center~$\Z$ is not trivial.
\emph{Therefore, if we require~\ref{IntAssum} and~\ref{IsolSys}, then any assignment of probabilities to elements in~$\Q$ must be contextual.}

 \section{Isolated systems}\label{sec:isolated_systems}
\noindent
After the discussion in Section~\ref{sec:interacting_systems}, we are now able to explicate the notion of an \emph{isolated system} consistent with the two assumptions \ref{IsolSys} and \ref{IntAssum}:
\emph{A system is isolated if and only if inquiries about any two equivalent questions $\alpha\sim\beta,\alpha,\beta\in\Q$ yield equal answers with certainty.}

To empirically verify whether a system is isolated---at least for the time between two inquiries---, one inquires about any two equivalent questions and compares the thus obtained answers.
If the answers differ, then the inquiries detect an intermediate inquiry about a non-compatible question, and the system is \emph{not isolated}.
While the equality of the answers is necessary, it is, however, \emph{not sufficient} for the system to be isolated.

This empirical test is an essential ingredient in a key-distribution protocol like~\cite{BB84}.
Inversely, any theory satisfying the assumption~\ref{IntAssum} and~\ref{IsolSys} allows for a similar protocol.

Note the interdependence of the equivalence relation and the notion of a system being isolated:
If a system is isolated, then we can empirically verify the equivalence relation.
For a system with an equivalence relation, we can empirically verify whether it is isolated.
Conversely, we cannot say whether a system is isolated without a pre-established equivalence relation, and, vice versa, we cannot verify the equivalence relation without the system being isolated.
 \section{Interactions within a joint system}\label{sec:interactions_within_a_joint_system}
\noindent
If we assume the lattice structure as discussed in Section~\ref{sec:describing_systems}, then we can explicate the above considerations.
In~\cite{Piron1976}, Piron shows that an orthomodular lattice has a trivial center if and only if it is irreducible, i.e., the lattice cannot be written as a direct union, defined as follows:
The direct product of orthocomplemented lattices~$L_i$ with~$i\in I$, forms another orthocomplemented lattice~$L^p$ with the order relation 
\begin{equation*}
  x > y, x,y\in L^p \ \Leftrightarrow \ x_i > y_i \ \forall i\in I
\end{equation*}
and the orthocomplementation
\begin{equation*}
  \neg x = (\neg x_1, \ldots, \neg x_i, \ldots)\,.
\end{equation*}
It follows from~\ref{IntAssum} that the lattice~$\QC_c$ cannot be the direct product of lattices~$\QC_1$ and~$\QC_2$.

We imagine~$S_2$ to be a friendly experimenter measuring~$S_1$, inspired by the Wigner's-friend experiment~\cite{wigner1963problem, deutsch1985quantum, Wigner1961}.
Let us, for now, merely consider~$S_1$:
Before and after our friend inquires about a non-trivial~$\alpha\in\Q_1$ we inquire about two equivalent~$\alpha'\sim\alpha''$ that belong to an equivalence class incompatible to the one represented by~$\alpha$. 
The joint system~$S_1\times S_2$ is however isolated.
Thus, the equivalence classes of the joint system are not induced by the subsystems if they interact: 
Despite,~$\alpha'\sim\alpha''$ in~$\Q_1$,~$(\alpha',1)\not\sim(\alpha'',1)$ in~$\Q_c$.
The interaction between the subsystems shows in the equivalence classes of the joint system. 

Let us characterize the friend's inquiry about a non-trivial~$\alpha\in a\in\QC_1$, more specifically, as follows: \begin{quote}
  If~$(\alpha_1,\sigma)<_t(\alpha_2,\beta)\in\Q_c$ with~$\alpha_i,\alpha\in a\in\QC_1$, then~$\alpha_1\leftrightarrow\alpha_2$ and~$\alpha_1\leftrightarrow\beta$, for some~$\sigma$.
\end{quote}
\emph{A measurement effects an implication that reaches across systems.}\footnote{To illustrate this characterization, we resort to the relative-state formalism~\cite{everett1957relative} of quantum mechanics. 
The unitary describing a measurement is characterized by the following effect:
If an observer performs a measurement $\Pi_{\phi}\in\Proj{\H_S}$ on $S$, the observer is initially attested to be in a ``ready-state,'' and $S$ is prepared in a state $\phi\in\H_S$, then the observer ends up in a state $\psi\in\H_O$ of ``having obtained the correct result.''
If an observer performs a measurement $\Pi_{\phi}\in\Proj{\H_S}$ on $S$, the observer is initially attested to be in a ``ready-state,'' and $S$ is prepared in a state $\phi^{\perp}\in\H_S$ \emph{orthogonal} to $\phi$, then the observer ends up in a state $\psi^{\perp}\in\H_O$ orthogonal to $\psi$.}
It is a case not accounted for in a product lattice.
In particular, the measurement establishes the equivalence between~$(\sigma,\alpha_1)$ and~$(\beta,\alpha_2)$ in~$\Q_c$ while~$\sigma$ and~$\beta$ might not be equivalent in~$\Q_2$. 
Let us denote~$m\in\QC_c$ the equivalence class of~$(\sigma,\alpha_1)$ and~$(\beta,\alpha_2)$.
The characterizations also implies:~$(a,1)$ and~$(a,0)$ are equivalence classes in~$\QC_c$ with~$(a,0)<m<(a,1)$.
In particular, the equivalence class~$n\in\QC_c$ represented by~$(\sigma,\alpha_1')$ is incompatible with~$(a,0)$ and~$(a,1)$ if~$a'\in\QC_1$ is incompatible with~$a\in\QC_1$.\footnote{See Lemma~\ref{lm:inc_elem} in the Appendix.
In quantum mechanics, this corresponds to preparing a superposition state with respect to the friend's measurement basis.} 
Therefore, also~$n$ and~$m$ are incompatible.

To empirically test whether the two subsystems~$S_1$ and~$S_2$ interact with one another, one empirically tests the equivalence relation on~$Q_c$ by inquiring about questions in the same equivalence class and verifying that their answers match.
That is, we test whether the joint system \emph{is isolated under this equivalence relation}~(see Section~\ref{sec:isolated_systems}) and, therefore, whether equivalent questions yield same answers, \emph{independent of the choice of the equivalence class}.
Imagine, we initially inquired about a question in~$n$.
To verify that~$S_c$ is isolated, and, thus, the two subsystems interacted, we inquire about a later element in~$n$.
\emph{By the incompatibility of~$n$ and~$(a,1)$, we cannot at the same time empirically test whether the system interacted, and know about the result of the measurement.}

We encounter the measurement problem: We cannot meaningfully---i.e., with the suitable empirical support---speak of the measurement as an interaction between two systems, while maintaining the idea of the measurement yielding definite results.

 \section{The Epistemological Measurement Problem}\label{sec:spectator_theories}
\noindent
The measurement problem unfolds if we compromise the interaction assumption~\ref{IntAssum} in order to save the measurement and its result from contextual dependencies.
Thus, the measurement problem exposes the idea that we can \emph{read off} measurement results without effects for the measured system---the idea of a \emph{spectator theory} described by Dewey as follows:
\blockcquote[{\S}1, p.~26]{DeweyQFC}[.]{The theory of knowing is modelled after what was supposed to take place in the act of vision.
The object refracts light to the eye and is seen; it makes a difference to the eye and to the person having an optical apparatus, but none to the thing seen. The real object is the object so fixed in its regal aloofness that it is a king to any beholding mind that may gaze upon it.
A spectator theory of knowledge is the inevitable outcome}
 A prominent explication of this epistemological model stands at the beginning of the article by Einstein, Podolski, and Rosen~\cite{EPR}:\footnote{The stance does not entirely comply with Einstein's take of a \emph{system} (see the discussion in~\cite{HW19L}). In fact, Einstein's involvement in the so-called EPR article remains debated.}
\blockcquote{EPR}[.]{If, without in any way disturbing a system, we can predict with certainty (i.e., with probability equal to unity) the value of a physical quantity, then there exists an element of physical reality corresponding to this physical quantity}

The quest for certain measurement results leads to an epistemic problem:
If knowledge is scientific knowledge and science is natural science, then the anchor of our knowledge is observation and measurement. 
And if knowledge must be constituted of certainties, then these observations cannot carry contextual dependences.\footnote{The spectator theory relates to correspondence theories of truth, as Habermas points out: \blockcquote[{\S}II, p.~68f, emphasis in original]{HabermasEuI_en}[.]{\emph{The meaning of knowledge itself becomes irrational}---in the name of rigorous knowledge.
In this way the naive idea that knowledge \emph{describes} reality becomes prevalent.
This is accompanied by the copy theory of truth, according to which the reversibly univocal correlation of statements and matters of fact must be understood as isomporphism.
Until the present day this objectivism has remained the trademark of a philosophy of science that appeared on the scene with Comte's positivism}
 The criticism of spectator theories relates to critiques of correspondence theories of truth as, e.g., in~\cite{Nietzsche1873,WittgPhiloUntersuchungen,Sellars1956,RortyCIS}.
In this context, the idea of reducing \emph{doing physics} to sole \emph{description}, as required for the measurement problem~\cite{HW19L}, becomes problematic.}

Before Galileo, the realm of certainties were reserved for the ``rational sciences'' excluding explicitly empirical or observational sciences.
The latter were, in fact, taken to come short of a `science.' 
Dewey expounds how with Galileo this separation was toppled and physics laid claim on the realm of certain knowledge.
Galileo achieved the acceptance of an empirical science into the circle of bearers of universal knowledge by synthesising experimental findings to mathematical statements that qualified as platonic truths.
\blockcquote[{\S}4, p.92]{DeweyQFC}[.]{The work of Galileo was not a development, but a revolution.
It marked a change from a qualitative to the quantitative or metric; from the heterogeneous to the homogeneous; from intrinsic forms to relations; from aesthetic harmonies to mathematical formulae; from contemplative enjoyment to active manipulation and control; from rest to change; from eternal objects to temporal sequence.
[\ldots]
But---and this `but' is of fundamental importance---in spite of the revolution, the old conceptions of knowledge as related to an antecedent reality and of moral regulation as derived from properties of this reality, persisted}
The idea that an antecedent and fixed reality can simply be \emph{read off} could be upheld because the subsets of phase space ordered by the set-inclusion form a Boolean lattice.

Quantum theory, however, brought about a substantial challenge to maintaining a spectator-theory of knowledge.\footnote{Bohr expresses in~\cite{Bohr1929a} a similar view on the epistemological implications of quantum theory:
\blockcquote[p.~115, as reprinted in \cite{BohrCW1985}]{Bohr1929a}[.]{The discovery of the quantum of action shows us, in fact, not only the natural limitation of classical physics, but, by throwing a new light upon the old philosophical problem of the objective existence of phenomena independently of our observations, confronts us with a situation hitherto unknown in natural science.
As we have seen, any observation necessitates an interference with the course of the phenomena, which is of such a nature that it deprives us of the foundation underlying the causal mode of description.
The limit, which nature herself has thus imposed upon us, of the possibility of speaking about phenomena as existing objectively finds its expression, as far as we can judge, just in the formulation of quantum mechanics.
However, this should not be regarded as a hindrance to further advance; we must only be prepared for the necessity of an ever extending abstraction from our customary demands for a directly visualizable description of nature}
 }
\blockcquote[{\S}8, p.~195f]{DeweyQFC}[.]{The principle of indeterminacy thus presents itself as the final step in the dislodgement of the old spectator theory of knowledge.
It marks the acknowledgment, within scientific procedure itself, of the fact that knowing is one kind of interaction which goes on within the world}
Our findings in Section~\ref{sec:interactions_within_a_joint_system} strengthen this insight: 
We expect any theory empirically warranting our contact with the world out there to be in conflict with the idea of a fully accessible truth-out-there.
Galileo's \emph{credo} that truth be sought in nature~\cite{Niederer82} must then be weakened to, e.g., a more structuralist or subjectivist version.
In either case, quantum mechanics draws legitimacy for the claim to represent this structure from a basis of experimental findings that we \emph{agree} upon.
This entails that different observers can make statements about \emph{equivalent} measurements performed on the \emph{same} system.
If one holds an antecedent structure of the world accountable for the agreement on measurements and on the reference to systems, then our access to this structure must evade contextuality.
Our access to this structure remains empirically unwarranted.
We are, again, left with a dualism.
 \section{Conclusion}\label{sec:conclusion}
\noindent
The requirement that there are \emph{isolated systems} for which inquiries about equivalent questions yield equal answers combined with the demand for \emph{traceable interactions} leads to \emph{contextual} theories.
With measurements corresponding to an interaction between two systems, we face a measurement problem:
We cannot meaningfully---i.e., with the suitable empirical support---speak of the measurement as an interaction between two systems while maintaining the idea of the measurement yielding definite results.
As such, the measurement problem exposes an incompatibility between epistemological stances beyond quantum mechanics:
The idea that knowledge is constituted of absolute certainties entails a spectator theory.
A spectator theory, does, however, preclude empirical evidence for interactions during measurements, i.e., for evidence that our knowledge has support from \emph{outside of us}, support from our contact with the world around us.

The incompatibility turns into a tension \emph{within} positivism:
The idea of an external source of our knowledge conflicts with the adherence to a spectator theory needed to ensure absolute certainty about that knowledge. 
 \begin{acknowledgments}
\noindent
  This work is supported by the Swiss National Science Foundation (SNF), and the \emph{NCCR QSIT}.
  We would like to thank \"Amin Baumeler, Veronika Baumann, Cecilia Boschini, Paul Erker, Xavier Coiteux-Roy, Claus Beisbart, Manuel Gil, and Christian W\"{u}thrich for helpful discussions.
\end{acknowledgments}

\appendix
\section*{Appendix: A brief introduction\\ to lattices}\label{sec:proofs}
\noindent
\begin{definition}[Lattice]
  A partially ordered set~$(L,<)$ with unique greatest lower bound~$a\land b$ and unique least upper bound~$a\lor b$ is called a \emph{lattice}.
  \begin{enumerate}[label={(\alph*)}]
  \item If the lattice has a minimal element~$0$ and a maximal element~$1$, then the lattice is called \emph{bounded}.
  \item An \emph{orthocomplementation} is a map~$a\mapsto\neg a$ such that
    \begin{equation*}
      a \land\neg a = 0, \ a\lor \neg a = 1, \ \text{if}\ a<b: \neg a > \neg b\,.
    \end{equation*}
  \item In an orthocomplemented lattice~$L$, two elements~$a,b$ are \emph{orthogonal}, denoted~$a\perp b$, if~$a<\neg b$, or, equivalently,~$b<\neg a$---using the order reversing property of the ortho-complementation.
  \item An orthocomplemented lattice~$L$ is \emph{orthomodular} if it satisfies \emph{weak modularity}, i.e., for all~$a<b$: 
    \begin{equation*}
      a\lor (\neg a\land b) = a\,,
    \end{equation*}
    or, equivalently, if~$a>b$:
    \begin{equation*}
      a\land (\neg a\lor b) = a\,.
    \end{equation*}
  \item A lattice~$L$ is \emph{distributive} if for any~$a,b,c\in L$
    \begin{equation*}
      a\land (b\lor c) = (a\lor b) \land (a\lor c)\,.
    \end{equation*}
  \end{enumerate}
\end{definition}

\begin{definition}[Compatibility]
  Two elements in a lattice,~$a,b\in L$ are called \emph{compatible}, if the lattice generated by~$\{a,\neg a, b,\neg b\}$ is distributive.
\end{definition}
If for~$a,b\in L$
\begin{equation*}
  a\land (\neg a\lor b) = b \qquad \neg b\land (b \lor \neg a) = \neg a\,,
\end{equation*}
then the elements 
\begin{equation*}
  0,a,\neg a, b, \neg b, a\land\neg b, \neg a\lor b, 1
\end{equation*}
form the distributive sublattice generated by~$\{a,\neg a,b,\neg b\}$.

\begin{theorem}[Compatibility I]\label{thm:compI}
  In an orthomodular lattice~$L$,~$a$ and~$b$ are compatible if and only if 
  \begin{equation}\label{eq:comp}
    (a\land b) \lor (\neg a\land b) = b\,.
  \end{equation}
\end{theorem}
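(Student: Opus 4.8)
The plan is to prove the two implications separately; the ``only if'' direction is immediate and the ``if'' direction contains all the work. For ``only if'', if $a$ and $b$ are compatible then the sublattice generated by $\{a,\neg a,b,\neg b\}$ is distributive, so $b = b\land 1 = b\land(a\lor\neg a) = (b\land a)\lor(b\land\neg a)$, which is exactly \eqref{eq:comp}.

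For ``if'', assume $(a\land b)\lor(\neg a\land b)=b$. The step I would establish first, and the one I expect to be the crux, is a symmetry lemma valid in any orthomodular lattice: \emph{if $x=(x\land y)\lor(x\land\neg y)$ then $y=(x\land y)\lor(\neg x\land y)$}. To prove it, set $d:=(x\land y)\lor(\neg x\land y)$; since $d\le y$, orthomodularity gives $y=d\lor(\neg d\land y)$, so it suffices to check $\neg d\land y=0$. De Morgan yields $\neg d=(\neg x\lor\neg y)\land(x\lor\neg y)$, and complementing the hypothesis yields $\neg x=(\neg x\lor\neg y)\land(\neg x\lor y)$; since $y\le\neg x\lor y$, the latter forces $(\neg x\lor\neg y)\land y\le\neg x$, hence $(\neg x\lor\neg y)\land y=\neg x\land y$. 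Substituting, $\neg d\land y=(\neg x\land y)\land(x\lor\neg y)=(\neg x\land y)\land\neg(\neg x\land y)=0$. This is the only place where orthomodularity (rather than mere orthocomplementation) is used, and finding the right bracketing of the meets and joins is the main obstacle; everything after it is bookkeeping.

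With the lemma in hand I would bootstrap to all four ``splitting'' identities. The property ``$x=(x\land y)\lor(x\land\neg y)$'' is visibly unchanged under $y\mapsto\neg y$, so starting from the hypothesis (``$b$ splits over $a$'') the lemma gives in turn that $a$ splits over $b$, that $\neg a$ splits over $b$, and that $\neg b$ splits over $a$; explicitly $a=(a\land b)\lor(a\land\neg b)$, $\neg a=(\neg a\land b)\lor(\neg a\land\neg b)$, $b=(a\land b)\lor(\neg a\land b)$, and $\neg b=(a\land\neg b)\lor(\neg a\land\neg b)$. Now put $p=a\land b$, $q=a\land\neg b$, $r=\neg a\land b$, $s=\neg a\land\neg b$. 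These four ``quadrants'' are pairwise orthogonal, because in each pair one element lies below $a$ or below $b$ while the other lies below the corresponding complement, and the identities just obtained read $p\lor q=a$, $r\lor s=\neg a$, $p\lor r=b$, $q\lor s=\neg b$, so that $p\lor q\lor r\lor s=1$.

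Finally I would invoke the standard fact that a finite pairwise-orthogonal family with join $1$ generates a Boolean subalgebra of an orthomodular lattice; its two ingredients are that $u\perp v_1$ and $u\perp v_2$ imply $u\perp(v_1\lor v_2)$ (De Morgan), and that $u\perp v$ together with $u\lor v=1$ forces $v=\neg u$ (orthomodularity applied to $v\le\neg u$, since $\neg u=v\lor(\neg u\land\neg v)=v\lor\neg(u\lor v)=v$). Hence $I\mapsto\bigvee_{x\in I}x$ is a surjective lattice homomorphism from the Boolean algebra $2^{\{p,q,r,s\}}$ onto the subalgebra generated by $p,q,r,s$, which is therefore distributive. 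This subalgebra contains $a$, $\neg a$, $b$, $\neg b$, so the sublattice generated by those four elements is a sublattice of a distributive lattice, hence distributive, i.e.\ $a$ and $b$ are compatible.
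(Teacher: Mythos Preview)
Your proof is correct. Both your argument and the paper's share the same skeleton: the ``only if'' direction is immediate from distributivity, and for ``if'' one first establishes the symmetric splitting $a=(a\land b)\lor(a\land\neg b)$ and then argues distributivity of the generated sublattice. The differences lie in how both steps are executed. For symmetry, the paper takes the De~Morgan dual $\neg b=(\neg a\lor\neg b)\land(a\lor\neg b)$ of the hypothesis, uses absorption to simplify $a\land\neg b$ to $a\land\neg(a\land b)$, and then applies orthomodularity to the pair $a\land b\le a$; your route via showing $\neg d\land y=0$ is an equally short alternative. The more substantive divergence is in the distributivity step: the paper verifies identities of the shape $(a\land b)\lor\neg b=a\lor\neg b$ by a direct orthomodularity computation and then appeals (following Piron) to the resulting eight-element sublattice, whereas you pass to the four pairwise-orthogonal ``quadrants'' $p,q,r,s$ with join~$1$ and invoke the general fact that such a family generates a Boolean subalgebra. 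Your approach is more structural and makes transparent \emph{why} the generated sublattice is distributive (it sits inside a homomorphic image of $2^{\{p,q,r,s\}}$); the paper's is a terser hands-on verification that offloads the closure check to the cited reference.
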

The proof consists of combining Theorem~(2.15) and~(2.17) in~\cite{Piron1976}.
\begin{proof}
  If the sublattice is distributive, then 
  \begin{equation*}
    (a \land b) \lor (a \land \neg b) = a \quad \text{and} \quad (a \land b) \lor (\neg a \land b) = b\,.
  \end{equation*}
  It remains to show that~\eqref{eq:comp} is sufficient. 
  First we show that compatibility is symmetric: If~$(a\land b)\lor(\neg a\land b)=b$, 
  then, with~$a\land(a\lor c) =a$ for any~$c$,
  \begin{IEEEeqnarray*}{RL}
    a\land \neg b &= a\land (\neg a\lor\neg b) \land (a\lor\neg b) \\
    &= a\land (\neg a\lor\neg b) = a\land \neg (a\land b)\,.
  \end{IEEEeqnarray*}
  As~$a> (a \land b)$, we employ orthomodularity,
  \begin{equation*}
    a = (a\land b) \lor (a \land \neg( a\land b) ) = (a\land b) \lor (a \land \neg b)\,,
  \end{equation*}
  which proves the symmetry of compatibility.
  Further, we have to show equalities of the form
  \begin{equation}\label{eq:distr}
    (a\land b) \lor\neg b = a \land\neg b\,.
  \end{equation}
  Note, first, that in any lattice~$a\land b < b$ and, therefore,
  \begin{equation*}
    \underbrace{(a \land b) \lor\neg b}_{=:c_1} < \underbrace{a\lor\neg b}_{=:c_2}\,.  
  \end{equation*}
  Applying orthomodularity, i.e.,~$c_2\land ( \neg c_2 \lor c_1) = c_1$, yields
  \begin{equation*}
    (a\lor \neg b) \land \big( \underbrace{(a\land b) \lor (\neg a\land b) }_{= b} \lor \neg b \big) = (a\land b) \lor\neg b
  \end{equation*}
  and, thus, we obtain~\eqref{eq:distr}.
\end{proof}
Note the important role of weak modularity in the proof above.

\begin{theorem}[Compatibility II]\label{thm:compII}
  In an orthomodular lattice~$L$,~$a$ is compatible with~$b$ if and only if there exists mutually orthogonal elements~$a',b',c\in L$ such that 
  \begin{equation*}
    a=a'\lor c \qquad b=b'\lor c\,.
  \end{equation*}
\end{theorem}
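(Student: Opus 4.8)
The plan is to route everything through the characterization in Theorem~\ref{thm:compI}, taking the obvious candidates $c:=a\land b$ for the common summand and $a':=a\land\neg b$, $b':=\neg a\land b$ for the two private pieces.

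First I would handle the forward direction. If $a$ and $b$ are compatible then, by definition, the sublattice generated by $\{a,\neg a,b,\neg b\}$ is distributive, so---as already recorded at the start of the proof of Theorem~\ref{thm:compI}---$(a\land b)\lor(a\land\neg b)=a$ and $(a\land b)\lor(\neg a\land b)=b$; these say exactly that $c\lor a'=a$ and $c\lor b'=b$. What remains is mutual orthogonality of $a',b',c$, which is a short order-reversal computation: from $c\le b$ and $b'\le b$ one gets $a'=a\land\neg b\le\neg b\le\neg c$ and $a'\le\neg b\le\neg b'$; from $c\le a$ one gets $b'=\neg a\land b\le\neg a\le\neg c$. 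Hence $a'\perp b'$, $a'\perp c$, and $b'\perp c$.

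Next I would do the converse. Given $a=a'\lor c$, $b=b'\lor c$ with $a',b',c$ mutually orthogonal, by Theorem~\ref{thm:compI} it suffices to show $(a\land b)\lor(\neg a\land b)=b$. Since $c\le a$ and $c\le b$ we have $c\le a\land b$; and since $b'\perp a'$ and $b'\perp c$ we have $b'\le\neg a'\land\neg c=\neg(a'\lor c)=\neg a$, so, together with $b'\le b$, also $b'\le\neg a\land b$. Therefore $(a\land b)\lor(\neg a\land b)\ge c\lor b'=b$, while the reverse inequality is automatic because both joinands lie below $b$; equality follows, hence $a$ and $b$ are compatible.

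I do not expect a serious obstacle, precisely because Theorem~\ref{thm:compI} already carries the orthomodular content. The only point needing a little care is in the forward direction, namely that the ``swapped'' identity $(a\land b)\lor(a\land\neg b)=a$ is genuinely available; it is, since it sits in the easy half of the proof of Theorem~\ref{thm:compI}. Were Theorem~\ref{thm:compI} not at hand, the converse would instead require a Foulis--Holland-style distributivity statement for the orthogonal triple $\{a',b',c\}$, or a direct manipulation with the orthomodular law applied to $(a'\lor c)\land(b'\lor c)$, and that would be the hard part; routing through Theorem~\ref{thm:compI} sidesteps it.
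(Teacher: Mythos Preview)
Your proposal is correct and follows the same overall plan as the paper: the same candidates $c=a\land b$, $a'=a\land\neg b$, $b'=\neg a\land b$ in the forward direction, and in the converse a verification of the criterion~\eqref{eq:comp} from Theorem~\ref{thm:compI}.

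The one noteworthy difference is in the converse. The paper establishes the \emph{exact} identities $a\land b=c$ and $\neg a\land b=b'$, each by a direct appeal to weak modularity, and only then forms their join to recover $b$. You instead observe merely the inequalities $c\le a\land b$ and $b'\le\neg a\land b$, which need nothing beyond the order-reversing complement, and combine them with the trivial upper bound to get~\eqref{eq:comp}. Your route is shorter and, as you note, pushes all the orthomodular content into Theorem~\ref{thm:compI}; the paper's route yields the extra information that the decomposition is essentially unique (namely $c=a\land b$, etc.), which is occasionally useful elsewhere but not needed for the statement at hand.
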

\begin{proof}
  If~$a$ and~$b$ are compatible, then
  \begin{IEEEeqnarray*}{RL}
    (a\land b)\lor(\neg a\land b) &= b \\
    (a\land b)\lor(a\land\neg b) &= a \\
  \end{IEEEeqnarray*}
  and thus we can set 
  \begin{equation*}
    c:=a\land b \quad a':=a\land\neg b \quad b':=b\land\neg a
  \end{equation*}
  which yields the orthogonal elements.
  If, inversely,~$a$ and~$b$ can be expressed in the above form, then
  \begin{IEEEeqnarray*}{RL}
    \neg a\land b &= \neg a' \land \neg c \land (b'\lor c) \\
    &= b'\land \neg a' = b'
  \end{IEEEeqnarray*}
  using weak modularity.
  Furthermore,
  \begin{IEEEeqnarray*}{RL}
    a\land b &= (a'\lor c) \land (b'\lor c) \\
    &< (a'\lor c)\land(\neg a' \lor c)\\
    &= (a'\lor c)\land\neg a' = c
  \end{IEEEeqnarray*}
  using weak modularity again,
  and, thus, with 
  \begin{equation*}
    a\land b = (a'\lor c) \land (b'\lor c) > c 
  \end{equation*}
  we obtain~$a\land b= c$. 
  This yields~\ref{eq:comp}.
\end{proof}

\begin{lemma}[Incompatible elements]\label{lm:inc_elem}
  If~$a<b$ and~$a$ is not compatible with~$c$ for some~$a,b,c\in L$, then also~$b$ and~$c$ are incompatible.
\end{lemma}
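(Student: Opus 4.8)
The natural route is to argue the contrapositive: assuming $a<b$ and that $b$ and $c$ are compatible, I would show that $a$ and $c$ are compatible. First I would feed the compatibility of $b$ and $c$ through Theorem~\ref{thm:compII} to produce mutually orthogonal elements with $b=b'\lor e$ and $c=c'\lor e$; concretely, just as in the proof of that theorem, $e=b\land c$, $b'=b\land\neg c$, and $c'=c\land\neg b$. The aim is then to restrict this decomposition along $a<b$: put $f:=a\land c$ and $a':=a\land\neg c$, keep $c':=c\land\neg a$ (note $\neg a\ge\neg b$), and check that $a',c',f$ are mutually orthogonal with $a=a'\lor f$ and $c=c'\lor f$, whence the converse half of Theorem~\ref{thm:compII} returns the compatibility of $a$ and $c$.

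Equivalently, and more economically, by Theorem~\ref{thm:compI} it suffices to establish $(a\land c)\lor(\neg a\land c)=c$. Here the inclusion ``$\le$'' is automatic, and from $a<b$ one gets $a\land c\le b\land c=e$ and $\neg a\land c\ge\neg b\land c=c'$, so the whole statement collapses to the single inclusion
\begin{equation*}
  e\;=\;b\land c\;\le\;(a\land c)\lor(\neg a\land c)\,,
\end{equation*}
i.e.\ to the claim that the overlap $e$ of $b$ with $c$ is already recovered by the way $a$ and $\neg a$ meet $c$. The remaining bookkeeping — the orthogonality of the pieces together with the meet identities for $\neg a\land b$ and $a\land\neg b$ — consists of the same weak-modularity manipulations carried out in the proofs of Theorems~\ref{thm:compI} and~\ref{thm:compII}, so I do not expect difficulty there.

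The main obstacle is precisely that last inclusion. The hypothesis $a<b$ does not by itself make $a$ compatible with $e=b\land c$: below a fixed element there can sit two elements that are mutually incompatible, so $a\land e$ may be strictly below $e$ and the inclusion can then fail. I would therefore expect the proof to need more than $a<b$ alone — in particular the special way $c$ sits relative to $a$ and $b$ that is available in the intended application, where $a$ plays the role of $(a,0)$ or $(a,1)$ bracketing the measurement class $m$ and $c$ is the superposition-type class $n$. Pinning down that extra input and verifying that it does force $b\land c\le(a\land c)\lor(\neg a\land c)$ is where the real content of the lemma lies; without it, the statement should be read with the implicit hypotheses of that setting in force.
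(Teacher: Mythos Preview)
Your contrapositive setup via Theorem~\ref{thm:compII} is exactly the route the paper takes: it too writes $b=b'\lor d$, $c=c'\lor d$ with $b',c',d$ mutually orthogonal and then manipulates $a\land\neg c$. And your diagnosis that something is missing is correct---the lemma as stated is false. Take $b=1$: then $b$ is compatible with every $c$, yet certainly not every $a<1$ is compatible with every $c$ (for instance two non-commuting rank-one projectors in $\Proj{\mathbb{C}^2}$). The inclusion you isolate, $b\land c\le(a\land c)\lor(\neg a\land c)$, is precisely what fails in that example.

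The paper's own chain correctly reaches $a\land\neg c=a\land b'$ (this is just $a\land\neg c=a\land b\land\neg c$, immediate from $a\le b$), but then asserts $a\land b'=a\land b=a$. The second equality is $a\le b$; the first, however, says $a\land b\land\neg c=a\land b$, i.e.\ $a\le\neg c$, which is nowhere among the hypotheses and is exactly what the $b=1$ counterexample violates. So your conclusion that additional structure from the intended application must be in force is right; the paper's proof does not supply that structure and is itself in error.
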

\begin{proof}
  We have to show: If~$a<b$ and~$b$ is compatible with~$c$, then also~$a$ is compatible with~$c$.
  Let us consider
  \begin{IEEEeqnarray*}{RLL}
    a\land \neg c  &= a\land \neg c' \land \neg d \qquad &\text{Thm~\ref{thm:compII}}: c=c'\lor d \\ 
    &= a\land b'\land  \neg d &c'\perp b'\\
    &= a\land b' & d\perp b'\\
    &= a\land b = a &a<b
  \end{IEEEeqnarray*}
  and thus
  \begin{IEEEeqnarray*}{RL}
    (a\land c) \lor (a\land\neg c ) &= (a\land c) \lor a = a \,.
  \end{IEEEeqnarray*}
\end{proof}
 

\begin{thebibliography}{34}\makeatletter
\providecommand \@ifxundefined [1]{\@ifx{#1\undefined}
}\providecommand \@ifnum [1]{\ifnum #1\expandafter \@firstoftwo
 \else \expandafter \@secondoftwo
 \fi
}\providecommand \@ifx [1]{\ifx #1\expandafter \@firstoftwo
 \else \expandafter \@secondoftwo
 \fi
}\providecommand \natexlab [1]{#1}\providecommand \enquote  [1]{``#1''}\providecommand \bibnamefont  [1]{#1}\providecommand \bibfnamefont [1]{#1}\providecommand \citenamefont [1]{#1}\providecommand \href@noop [0]{\@secondoftwo}\providecommand \href [0]{\begingroup \@sanitize@url \@href}\providecommand \@href[1]{\@@startlink{#1}\@@href}\providecommand \@@href[1]{\endgroup#1\@@endlink}\providecommand \@sanitize@url [0]{\catcode `\\12\catcode `\$12\catcode
  `\&12\catcode `\#12\catcode `\^12\catcode `\_12\catcode `\%12\relax}\providecommand \@@startlink[1]{}\providecommand \@@endlink[0]{}\providecommand \url  [0]{\begingroup\@sanitize@url \@url }\providecommand \@url [1]{\endgroup\@href {#1}{\urlprefix }}\providecommand \urlprefix  [0]{URL }\providecommand \Eprint [0]{\href }\providecommand \doibase [0]{http://dx.doi.org/}\providecommand \selectlanguage [0]{\@gobble}\providecommand \bibinfo  [0]{\@secondoftwo}\providecommand \bibfield  [0]{\@secondoftwo}\providecommand \translation [1]{[#1]}\providecommand \BibitemOpen [0]{}\providecommand \bibitemStop [0]{}\providecommand \bibitemNoStop [0]{.\EOS\space}\providecommand \EOS [0]{\spacefactor3000\relax}\providecommand \BibitemShut  [1]{\csname bibitem#1\endcsname}\let\auto@bib@innerbib\@empty
\bibitem [{\citenamefont {Wigner}(1963)}]{wigner1963problem}\BibitemOpen
  \bibfield  {author} {\bibinfo {author} {\bibfnamefont {Eugene~P.}\
  \bibnamefont {Wigner}},\ }\bibfield  {title} {\enquote {\bibinfo {title} {The
  problem of measurement},}\ }\href@noop {} {\bibfield  {journal} {\bibinfo
  {journal} {American Journal of Physics}\ }\textbf {\bibinfo {volume} {31}},\
  \bibinfo {pages} {6--15} (\bibinfo {year} {1963})}\BibitemShut {NoStop}\bibitem [{\citenamefont {Deutsch}(1985)}]{deutsch1985quantum}\BibitemOpen
  \bibfield  {author} {\bibinfo {author} {\bibfnamefont {David}\ \bibnamefont
  {Deutsch}},\ }\bibfield  {title} {\enquote {\bibinfo {title} {Quantum theory
  as a universal physical theory},}\ }\href@noop {} {\bibfield  {journal}
  {\bibinfo  {journal} {International Journal of Theoretical Physics}\ }\textbf
  {\bibinfo {volume} {24}},\ \bibinfo {pages} {1--41} (\bibinfo {year}
  {1985})}\BibitemShut {NoStop}\bibitem [{\citenamefont {Wigner}(1961)}]{Wigner1961}\BibitemOpen
  \bibfield  {author} {\bibinfo {author} {\bibfnamefont {Eugene~P.}\
  \bibnamefont {Wigner}},\ }\bibfield  {title} {\enquote {\bibinfo {title}
  {Remarks on the mind-body question},}\ }in\ \href@noop {} {\emph {\bibinfo
  {booktitle} {The Scientist Speculates}}},\ \bibinfo {editor} {edited by\
  \bibinfo {editor} {\bibfnamefont {I.~J.}\ \bibnamefont {Good}}}\ (\bibinfo
  {publisher} {Heineman},\ \bibinfo {year} {1961})\BibitemShut {NoStop}\bibitem [{\citenamefont {Maudlin}(1995)}]{Maudlin95}\BibitemOpen
  \bibfield  {author} {\bibinfo {author} {\bibfnamefont {Tim}\ \bibnamefont
  {Maudlin}},\ }\bibfield  {title} {\enquote {\bibinfo {title} {Three
  measurement problems},}\ }\href {} {\bibfield
  {journal} {\bibinfo  {journal} {Topoi}\ }\textbf {\bibinfo {volume} {14}},\
  \bibinfo {pages} {7--15} (\bibinfo {year} {1995})}\BibitemShut {NoStop}\bibitem [{\citenamefont {{Baumann}}\ \emph {et~al.}(2016)\citenamefont
  {{Baumann}}, \citenamefont {{Hansen}},\ and\ \citenamefont {{Wolf}}}]{BHW16}\BibitemOpen
  \bibfield  {author} {\bibinfo {author} {\bibfnamefont {Veronika}\
  \bibnamefont {{Baumann}}}, \bibinfo {author} {\bibfnamefont {Arne}\
  \bibnamefont {{Hansen}}}, \ and\ \bibinfo {author} {\bibfnamefont {Stefan}\
  \bibnamefont {{Wolf}}},\ }\bibfield  {title} {\enquote {\bibinfo {title}
  {{The measurement problem is the measurement problem is the measurement
  problem}},}\ }\href@noop {} {\  (\bibinfo {year} {2016})},\ \Eprint
  {http://arxiv.org/abs/1611.01111} {arXiv:1611.01111 [quant-ph]} \BibitemShut
  {NoStop}\bibitem [{\citenamefont {Popper}(1958)}]{Popper1958}\BibitemOpen
  \bibfield  {author} {\bibinfo {author} {\bibfnamefont {Karl}\ \bibnamefont
  {Popper}},\ }\href@noop {} {\emph {\bibinfo {title} {The Logic of Scientific
  Discovery}}},\ Harper Torchbooks\ (\bibinfo  {publisher} {HarperCollins
  Canada, Limited},\ \bibinfo {year} {1958})\BibitemShut {NoStop}\bibitem [{\citenamefont {Birkhoff}\ and\ \citenamefont {von
  Neumann}(1936)}]{BirkhNeumLQM}\BibitemOpen
  \bibfield  {author} {\bibinfo {author} {\bibfnamefont {Garrett}\ \bibnamefont
  {Birkhoff}}\ and\ \bibinfo {author} {\bibfnamefont {John}\ \bibnamefont {von
  Neumann}},\ }\bibfield  {title} {\enquote {\bibinfo {title} {The logic of
  quantum mechanics},}\ }\href@noop {} {\bibfield  {journal} {\bibinfo
  {journal} {Annals of Mathematics}\ }\textbf {\bibinfo {volume} {37}},\
  \bibinfo {pages} {823--843} (\bibinfo {year} {1936})}\BibitemShut {NoStop}\bibitem [{\citenamefont {Piron}(1976)}]{Piron1976}\BibitemOpen
  \bibfield  {author} {\bibinfo {author} {\bibfnamefont {Constantin}\
  \bibnamefont {Piron}},\ }\href@noop {} {\emph {\bibinfo {title} {Foundations
  of Quantum Physics}}},\ Frontiers in Physics\ (\bibinfo  {publisher} {W. A.
  Benjamin, Advanced Book Program},\ \bibinfo {year} {1976})\BibitemShut
  {NoStop}\bibitem [{\citenamefont {Bacciagaluppi}(2009)}]{bacciagaluppi2009}\BibitemOpen
  \bibfield  {author} {\bibinfo {author} {\bibfnamefont {Guido}\ \bibnamefont
  {Bacciagaluppi}},\ }\bibfield  {title} {\enquote {\bibinfo {title} {Is logic
  empirical?}}\ }in\ \href {} {\emph {\bibinfo
  {booktitle} {Handbook of Quantum Logic and Quantum Structures}}},\ \bibinfo
  {editor} {edited by\ \bibinfo {editor} {\bibfnamefont {Kurt}\ \bibnamefont
  {Engesser}}, \bibinfo {editor} {\bibfnamefont {Dov~M.}\ \bibnamefont
  {Gabbay}}, \ and\ \bibinfo {editor} {\bibfnamefont {Daniel}\ \bibnamefont
  {Lehmann}}}\ (\bibinfo  {publisher} {Elsevier},\ \bibinfo {year} {2009})\
  pp.\ \bibinfo {pages} {49--78}\BibitemShut {NoStop}\bibitem [{\citenamefont {Coecke}\ \emph {et~al.}(2000)\citenamefont {Coecke},
  \citenamefont {Moore},\ and\ \citenamefont {Wilce}}]{Coecke2000}\BibitemOpen
  \bibfield  {author} {\bibinfo {author} {\bibfnamefont {Bob}\ \bibnamefont
  {Coecke}}, \bibinfo {author} {\bibfnamefont {David}\ \bibnamefont {Moore}}, \
  and\ \bibinfo {author} {\bibfnamefont {Alexander}\ \bibnamefont {Wilce}},\
  }\enquote {\bibinfo {title} {Operational quantum logic: An overview},}\ in\
  \href {} {\emph {\bibinfo {booktitle}
  {Current Research in Operational Quantum Logic: Algebras, Categories,
  Languages}}},\ \bibinfo {editor} {edited by\ \bibinfo {editor} {\bibfnamefont
  {Bob}\ \bibnamefont {Coecke}}, \bibinfo {editor} {\bibfnamefont {David}\
  \bibnamefont {Moore}}, \ and\ \bibinfo {editor} {\bibfnamefont {Alexander}\
  \bibnamefont {Wilce}}}\ (\bibinfo  {publisher} {Springer},\ \bibinfo {year}
  {2000})\ pp.\ \bibinfo {pages} {1--36}\BibitemShut {NoStop}\bibitem [{\citenamefont {Wilce}(2017)}]{sep-qmlogics}\BibitemOpen
  \bibfield  {author} {\bibinfo {author} {\bibfnamefont {Alexander}\
  \bibnamefont {Wilce}},\ }\bibfield  {title} {\enquote {\bibinfo {title}
  {Quantum logic and probability theory},}\ }in\ \href@noop {} {\emph {\bibinfo
  {booktitle} {The Stanford Encyclopedia of Philosophy}}},\ \bibinfo {editor}
  {edited by\ \bibinfo {editor} {\bibfnamefont {Edward~N.}\ \bibnamefont
  {Zalta}}}\ (\bibinfo  {publisher} {Metaphysics Research Lab, Stanford
  University},\ \bibinfo {year} {2017})\ \bibinfo {edition} {spring 2017}\
  ed.\BibitemShut {Stop}\bibitem [{\citenamefont {Moretti}(2019)}]{Moretti2019}\BibitemOpen
  \bibfield  {author} {\bibinfo {author} {\bibfnamefont {Valter}\ \bibnamefont
  {Moretti}},\ }\href@noop {} {\emph {\bibinfo {title} {Fundamental
  Mathematical Structures of Quantum Theory: Spectral Theory, Foundational
  Issues, Symmetries, Algebraic Formulation}}}\ (\bibinfo  {publisher}
  {Springer},\ \bibinfo {year} {2019})\BibitemShut {NoStop}\bibitem [{\citenamefont {Reichenbach}(1944)}]{Reichenbach1944}\BibitemOpen
  \bibfield  {author} {\bibinfo {author} {\bibfnamefont {Hans}\ \bibnamefont
  {Reichenbach}},\ }\href@noop {} {\emph {\bibinfo {title} {Philosophic
  Foundations of Quantum Mechanics}}}\ (\bibinfo  {publisher} {Dover
  Publications},\ \bibinfo {year} {1944})\BibitemShut {NoStop}\bibitem [{\citenamefont {Putnam}(1957)}]{Putnam1957}\BibitemOpen
  \bibfield  {author} {\bibinfo {author} {\bibfnamefont {Hilary}\ \bibnamefont
  {Putnam}},\ }\bibfield  {title} {\enquote {\bibinfo {title} {Three-valued
  logic},}\ }\href@noop {} {\bibfield  {journal} {\bibinfo  {journal}
  {Philosophical Studies}\ }\textbf {\bibinfo {volume} {8}},\ \bibinfo {pages}
  {73--80} (\bibinfo {year} {1957})}\BibitemShut {NoStop}\bibitem [{\citenamefont {Kamlah}(1981)}]{Kamlah1975}\BibitemOpen
  \bibfield  {author} {\bibinfo {author} {\bibfnamefont {Andreas}\ \bibnamefont
  {Kamlah}},\ }\bibfield  {title} {\enquote {\bibinfo {title} {The connexion
  between {Reichenbach's} three-valued and v. {Neumann's} lattice-theoretical
  quantum logic},}\ }\href@noop {} {\bibfield  {journal} {\bibinfo  {journal}
  {Erkenntnis (1975-)}\ }\textbf {\bibinfo {volume} {16}},\ \bibinfo {pages}
  {315--325} (\bibinfo {year} {1981})}\BibitemShut {NoStop}\bibitem [{\citenamefont {Popper}(1934)}]{Popper1934}\BibitemOpen
  \bibfield  {author} {\bibinfo {author} {\bibfnamefont {Karl}\ \bibnamefont
  {Popper}},\ }\href@noop {} {\emph {\bibinfo {title} {Logik der Forschung}}},\
  \bibinfo {edition} {11th}\ ed.\ (\bibinfo  {publisher} {Mohr Siebeck},\
  \bibinfo {year} {1934})\BibitemShut {NoStop}\bibitem [{\citenamefont {Gleason}(1957)}]{Gleason57}\BibitemOpen
  \bibfield  {author} {\bibinfo {author} {\bibfnamefont {Andrew}\ \bibnamefont
  {Gleason}},\ }\bibfield  {title} {\enquote {\bibinfo {title} {Measures on the
  closed subspaces of a {Hilbert} space},}\ }\href@noop {} {\bibfield
  {journal} {\bibinfo  {journal} {Indiana Univ. Math. J.}\ }\textbf {\bibinfo
  {volume} {6}},\ \bibinfo {pages} {885--893} (\bibinfo {year}
  {1957})}\BibitemShut {NoStop}\bibitem [{\citenamefont {Kochen}\ and\ \citenamefont {Specker}(1967)}]{KS67}\BibitemOpen
  \bibfield  {author} {\bibinfo {author} {\bibfnamefont {Simon}\ \bibnamefont
  {Kochen}}\ and\ \bibinfo {author} {\bibfnamefont {Ernst}\ \bibnamefont
  {Specker}},\ }\bibfield  {title} {\enquote {\bibinfo {title} {The problem of
  hidden variables in quantum mechanics},}\ }\href@noop {} {\bibfield
  {journal} {\bibinfo  {journal} {Journal of Mathematics and Mechanics}\
  }\textbf {\bibinfo {volume} {17}},\ \bibinfo {pages} {59--87} (\bibinfo
  {year} {1967})}\BibitemShut {NoStop}\bibitem [{\citenamefont {Held}(2016)}]{sep-kochen-specker}\BibitemOpen
  \bibfield  {author} {\bibinfo {author} {\bibfnamefont {Carsten}\ \bibnamefont
  {Held}},\ }\bibfield  {title} {\enquote {\bibinfo {title} {The
  {Kochen-Specker} theorem},}\ }in\ \href@noop {} {\emph {\bibinfo {booktitle}
  {The Stanford Encyclopedia of Philosophy}}},\ \bibinfo {editor} {edited by\
  \bibinfo {editor} {\bibfnamefont {Edward~N.}\ \bibnamefont {Zalta}}}\
  (\bibinfo  {publisher} {Metaphysics Research Lab, Stanford University},\
  \bibinfo {year} {2016})\ \bibinfo {edition} {fall 2016}\ ed.\BibitemShut
  {Stop}\bibitem [{\citenamefont {Jauch}\ and\ \citenamefont {Piron}(1963)}]{JP1963}\BibitemOpen
  \bibfield  {author} {\bibinfo {author} {\bibfnamefont {Josef-Maria}\
  \bibnamefont {Jauch}}\ and\ \bibinfo {author} {\bibfnamefont {Constantin}\
  \bibnamefont {Piron}},\ }\bibfield  {title} {\enquote {\bibinfo {title} {Can
  hidden variables be excluded in quantum mechanics?}}\ }\href@noop {}
  {\bibfield  {journal} {\bibinfo  {journal} {Helvetica Physica Acta}\ }\textbf
  {\bibinfo {volume} {36}},\ \bibinfo {pages} {827--837} (\bibinfo {year}
  {1963})}\BibitemShut {NoStop}\bibitem [{\citenamefont {Gudder}(1968)}]{Gudder1968}\BibitemOpen
  \bibfield  {author} {\bibinfo {author} {\bibfnamefont {Stanley~P.}\
  \bibnamefont {Gudder}},\ }\bibfield  {title} {\enquote {\bibinfo {title}
  {Dispersion-free states and the exclusion of hidden variables},}\ }\href@noop
  {} {\bibfield  {journal} {\bibinfo  {journal} {Proceedings of the American
  Mathematical Society}\ ,\ \bibinfo {pages} {319--324}} (\bibinfo {year}
  {1968})}\BibitemShut {NoStop}\bibitem [{\citenamefont {Bennett}\ and\ \citenamefont
  {Brassard}(1984)}]{BB84}\BibitemOpen
  \bibfield  {author} {\bibinfo {author} {\bibfnamefont {Charles}\ \bibnamefont
  {Bennett}}\ and\ \bibinfo {author} {\bibfnamefont {Gilles}\ \bibnamefont
  {Brassard}},\ }\bibfield  {title} {\enquote {\bibinfo {title} {{Quantum
  cryptography: Public key distribution and coin tossing}},}\ }in\ \href@noop
  {} {\emph {\bibinfo {booktitle} {Proceedings of IEEE International Conference
  on Computers, Systems, and Signal Processing}}}\ (\bibinfo {year} {1984})\
  p.\ \bibinfo {pages} {175}\BibitemShut {NoStop}\bibitem [{\citenamefont {Everett~III}(1957)}]{everett1957relative}\BibitemOpen
  \bibfield  {author} {\bibinfo {author} {\bibfnamefont {Hugh}\ \bibnamefont
  {Everett~III}},\ }\bibfield  {title} {\enquote {\bibinfo {title} {{``Relative
  state''} formulation of quantum mechanics},}\ }\href@noop {} {\bibfield
  {journal} {\bibinfo  {journal} {Reviews of Modern Physics}\ }\textbf
  {\bibinfo {volume} {29}},\ \bibinfo {pages} {454} (\bibinfo {year}
  {1957})}\BibitemShut {NoStop}\bibitem [{\citenamefont {Dewey}(1929)}]{DeweyQFC}\BibitemOpen
  \bibfield  {author} {\bibinfo {author} {\bibfnamefont {John}\ \bibnamefont
  {Dewey}},\ }\href@noop {} {\emph {\bibinfo {title} {The Quest for Certainty:
  A Study of the Relation of Knowledge and Action}}}\ (\bibinfo  {publisher}
  {Minton, Balch and Company},\ \bibinfo {year} {1929})\BibitemShut {NoStop}\bibitem [{\citenamefont {{Einstein}}\ \emph {et~al.}(1935)\citenamefont
  {{Einstein}}, \citenamefont {{Podolsky}},\ and\ \citenamefont
  {{Rosen}}}]{EPR}\BibitemOpen
  \bibfield  {author} {\bibinfo {author} {\bibfnamefont {Albert}\ \bibnamefont
  {{Einstein}}}, \bibinfo {author} {\bibfnamefont {Boris}\ \bibnamefont
  {{Podolsky}}}, \ and\ \bibinfo {author} {\bibfnamefont {Nathan}\ \bibnamefont
  {{Rosen}}},\ }\bibfield  {title} {\enquote {\bibinfo {title} {Can
  quantum-mechanical description of physical reality be considered complete?}}\
  }\href {} {\bibfield  {journal} {\bibinfo
  {journal} {Physical Review}\ }\textbf {\bibinfo {volume} {47}},\ \bibinfo
  {pages} {777--780} (\bibinfo {year} {1935})}\BibitemShut {NoStop}\bibitem [{\citenamefont {{Hansen}}\ and\ \citenamefont
  {{Wolf}}(2019)}]{HW19L}\BibitemOpen
  \bibfield  {author} {\bibinfo {author} {\bibfnamefont {Arne}\ \bibnamefont
  {{Hansen}}}\ and\ \bibinfo {author} {\bibfnamefont {Stefan}\ \bibnamefont
  {{Wolf}}},\ }\bibfield  {title} {\enquote {\bibinfo {title} {{Wigner's
  isolated friend}},}\ }\href@noop {} {\  (\bibinfo {year} {2019})},\ \Eprint
  {http://arxiv.org/abs/1912.03248} {arXiv:1912.03248 [quant-ph]} \BibitemShut
  {NoStop}\bibitem [{\citenamefont {Habermas}(1978)}]{HabermasEuI_en}\BibitemOpen
  \bibfield  {author} {\bibinfo {author} {\bibfnamefont {J\"urgen}\
  \bibnamefont {Habermas}},\ }\href@noop {} {\emph {\bibinfo {title} {Knowledge
  and Human Interests}}}\ (\bibinfo  {publisher} {Heinemann Educational},\
  \bibinfo {year} {1978})\BibitemShut {NoStop}\bibitem [{\citenamefont {Nietzsche}(1873)}]{Nietzsche1873}\BibitemOpen
  \bibfield  {author} {\bibinfo {author} {\bibfnamefont {Friedrich~W.}\
  \bibnamefont {Nietzsche}},\ }\href@noop {} {\emph {\bibinfo {title} {{\"U}ber
  Wahrheit und L{\"u}ge im aussermoralischen Sinne}}}\ (\bibinfo {year}
  {1873})\BibitemShut {NoStop}\bibitem [{\citenamefont {Wittgenstein}(1953)}]{WittgPhiloUntersuchungen}\BibitemOpen
  \bibfield  {author} {\bibinfo {author} {\bibfnamefont {Ludwig}\ \bibnamefont
  {Wittgenstein}},\ }\href@noop {} {\emph {\bibinfo {title} {Philosophische
  Untersuchungen}}},\ edited by\ \bibinfo {editor} {\bibfnamefont {Joachim}\
  \bibnamefont {Schulte}}\ (\bibinfo  {publisher} {Suhrkamp Verlag},\ \bibinfo
  {year} {1953})\BibitemShut {NoStop}\bibitem [{\citenamefont {Sellars}(1956)}]{Sellars1956}\BibitemOpen
  \bibfield  {author} {\bibinfo {author} {\bibfnamefont {Wilfrid~S.}\
  \bibnamefont {Sellars}},\ }\bibfield  {title} {\enquote {\bibinfo {title}
  {Empiricism and the philosophy of mind},}\ }\href@noop {} {\bibfield
  {journal} {\bibinfo  {journal} {Minnesota Studies in the Philosophy of
  Science}\ }\textbf {\bibinfo {volume} {1}},\ \bibinfo {pages} {253--329}
  (\bibinfo {year} {1956})}\BibitemShut {NoStop}\bibitem [{\citenamefont {Rorty}(1989)}]{RortyCIS}\BibitemOpen
  \bibfield  {author} {\bibinfo {author} {\bibfnamefont {Richard}\ \bibnamefont
  {Rorty}},\ }\href@noop {} {\emph {\bibinfo {title} {Contingency, Irony, and
  Solidarity}}}\ (\bibinfo  {publisher} {Cambridge University Press},\ \bibinfo
  {year} {1989})\BibitemShut {NoStop}\bibitem [{\citenamefont {Bohr}(1929)}]{Bohr1929a}\BibitemOpen
  \bibfield  {author} {\bibinfo {author} {\bibfnamefont {Niels}\ \bibnamefont
  {Bohr}},\ }\bibfield  {title} {\enquote {\bibinfo {title} {{Atomteorien og
  Grundprincipperne for Naturbeskrivelsen}},}\ }\href@noop {} {\bibfield
  {journal} {\bibinfo  {journal} {{Fysisk Tidsskrift}}\ }\textbf {\bibinfo
  {volume} {27}},\ \bibinfo {pages} {103--114} (\bibinfo {year}
  {1929})}\BibitemShut {NoStop}\bibitem [{\citenamefont {Bohr}(1985)}]{BohrCW1985}\BibitemOpen
  \bibfield  {author} {\bibinfo {author} {\bibfnamefont {Niels}\ \bibnamefont
  {Bohr}},\ }\href@noop {} {\emph {\bibinfo {title} {Niels Bohr Collected
  Works}}}\ (\bibinfo  {publisher} {Elsevier},\ \bibinfo {year}
  {1985})\BibitemShut {NoStop}\bibitem [{\citenamefont {Niederer}(1982)}]{Niederer82}\BibitemOpen
  \bibfield  {author} {\bibinfo {author} {\bibfnamefont {Ueli}\ \bibnamefont
  {Niederer}},\ }\bibfield  {title} {\enquote {\bibinfo {title} {{Galileo
  Galilei und die Entwicklung der Physik}},}\ }\href
  {} {\bibfield  {journal}
  {\bibinfo  {journal} {{Vierteljahrsschrift der Naturforschenden Gesellschaft
  in Z\"{u}rich}}\ ,\ \bibinfo {pages} {205--229}} (\bibinfo {year}
  {1982})}\BibitemShut {NoStop}\end{thebibliography}
\end{document}